\definecolor{reddish}{rgb}{1,.8,0.8}
\definecolor{blueish}{rgb}{0.8,.8,1}
\definecolor{greenish}{rgb}{.8,1,0.8}
\definecolor{yellowish}{rgb}{1,1,.20}
\definecolor{webred}{rgb}{0.5,0,0}
\definecolor{webblue}{rgb}{0,0,0.8}
\newtheorem{theorem}{\bfseries Theorem}
 \newtheorem{lemma}[theorem]{\bfseries Lemma}
 \newtheorem{assumption}{\bfseries Assumption}
 \newtheorem{definition}{\bfseries Definition}
 \newtheorem{prop}[theorem]{\bfseries Proposition}
\newcounter{rem}
\newcommand{\Alg}{{\bf Alg\/} }
\newcommand{\num}[1]{\relax\ifmmode \mathbb #1\else $\mathbb #1$\fi}
\newcommand{\nnnum}[1]{\relax\ifmmode 
  {\mathbb #1}_{\geq 0} \else ${\mathbb #1}_{\geq 0}$
  \fi}
\newcommand{\npnum}[1]{\relax\ifmmode 
  {\mathbb #1}_{\leq 0} \else ${\mathbb #1}_{\leq 0}$
  \fi}
\newcommand{\pnum}[1]{\relax\ifmmode 
  {\mathbb #1}_{> 0} \else ${\mathbb #1}_{> 0}$
  \fi}
\newcommand{\nnum}[1]{\relax\ifmmode 
  {\mathbb #1}_{< 0} \else ${\mathbb #1}_{< 0}$
  \fi}
\newcommand{\plnum}[1]{\relax\ifmmode 
  {\mathbb #1}_{+} \else ${\mathbb #1}_{+}$
  \fi}
\newcommand{\nenum}[1]{\relax\ifmmode 
  {\mathbb #1}_{-} \else ${\mathbb #1}_{-}$
  \fi}
\newcommand{\reals}{{\num R}}                    %reals
\newcommand{\nnreals}{{\nnnum R}}                    %nonnegative reals
\newcommand{\naturals}{{\num N}}                      %natural numbers
\newcommand{\N}{\mathcal{N}}
\newcommand{\E}{\mathcal{E}}
\newcommand{\T}{\mathbb{T}}
\newcommand{\exec}[1]{\relax\ifmmode {\sf Execs}_{#1} \else ${\sf Exec}_{#1}$\fi} 
\newcommand{\deq}{\mathrel{\stackrel{\scriptscriptstyle\Delta}{=}}}
\begin{document}
\title{Differentially Private Iterative Synchronous Consensus}
% \numberofauthors{3}
\author{
%\alignauthor
Zhenqi Huang \and Sayan Mitra \and Geir Dullerud \\
University of Illinois at Urbana-Champaign \\ 
\texttt{\{zhuang25,mitras,dullerud\}@illinois.edu }\\
} 
%\\
%\email{zhuang25@illinois.edu}\\
%       \affaddr{Coordinate Science Laboratory}\\
%       \affaddr{University of Illinois at Urbana Champaign}\\
%       \affaddr{Urbana, IL}
%\alignauthor
%Sayan Mitra \\
%       \email{mitras@illinois.edu}\\
%       \affaddr{Coordinate Science Laboratory}\\
%       \affaddr{University of Illinois at Urbana Champaign}\\
%       \affaddr{Urbana, IL}
%\alignauthor
%Geir Dullerud \\
%       \email{dullerud@illinois.edu}	\\
%			 \affaddr{Coordinate Science Laboratory}\\
%      \affaddr{University of Illinois at Urbana Champaign}\\
%       \affaddr{Urbana, IL}
%}
%\institute{University of Illinois at Urbana-Champaign}
%\email{\{zhuang25,mitras,dullerud\}@illinois.edu}
\maketitle

\begin{abstract}
The iterative consensus problem requires a set of processes or agents with different initial values, to interact and update their states to eventually converge to a common value. Protocols solving iterative consensus serve as building blocks in a  variety of systems where distributed coordination is required for load balancing, data aggregation, sensor fusion, filtering, clock synchronization and platooning of autonomous  vehicles. In this paper, we introduce the {\em private iterative consensus problem\/} where agents are required to converge while  protecting the privacy of their initial values from honest but curious adversaries.  Protecting the initial states, in many applications, suffice to protect all subsequent states of the individual participants. 

First, we adapt the notion of differential privacy in this setting of iterative computation. Next, we present a server-based  and a completely distributed randomized mechanism for solving private iterative consensus with adversaries who can observe the messages as well as the internal states of the server and a subset of the clients. Finally, we establish the tradeoff between privacy and the accuracy of the proposed randomized mechanism.
\end{abstract}

%\begin{keywords}
%Itreative consensus; privacy; convergence; distributed agreement.
%\end{keywords}

\section{Introduction}
\label{sec:intro}

%Consider a GPS-like service that provides route suggestions based on current traffic conditions. A user periodically queries the service and obtains route suggestions that are ranked by some performance criterion (e.g., estimated travel time, expected average speed), and then she moves forward along one of the routes. For providing route suggestions that reflect current traffic conditions, the service collects data from the user queries, {\em however\/}, the user is reluctant to share complete information about their current position, destination, and driving habits. Generally stated, this is the problem of interested: {\em how to achieve performance that depends on the collective state, while protecting individual's privacy?\/} 

%Several other applications have similar relationship between privacy, collective state, and control for achieving some performance: a hotel/flight booking service that responds to user demands, privacy-preserving load balancing, scheduling power consumption, and distributed control over a shared channel. In order to illustrate a concrete instance of this problem, we proceed by choosing specific definitions for privacy and the control objective. 

%\paragraph{Consensus}
This paper addresses the problem of reaching agreement in a group iteratively while preserving individual's privacy. 
The setup consists of $N$ agents, each with some initial information modeled as the valuation of a variable. 
The problem requires the agents to interact with each other and update their internal states, so that eventually they all converge to a common decision or value. This agreement to a common decision can then be used for coordinating the actions of the participating agents. Indeed, this {\em iterative consensus\/} has been used as a building block for designing a variety of distributed coordination protocols for load balancing~\cite{Cy89,Xiao:2007:DAC:1222667.1222952},  filtering and sensor fusion~\cite{Olfati-saber06distributedkalman,Xiao:2005:SRD:1147685.1147698}, clock synchronization, and flocking~\cite{Blondel,Saber.Murray2003Flockingwithobstacle,JLM03:TAC,TJG:TAC07,JM:JNSA2011}, to name a few. 

A natural, synchronous, and widely studied consensus mechanism involves, at each round, for every agent to update its state as a  weighted average of its own value with values of its neighboring agents. This update rule can be expressed as $x(t+1) = P x(t)$, where $x(t)$ is the vector of agent values and $P$ is a symmetric $N \times N$ matrix with $P_{ij}$ defining the communication weight between agents $i$ and $j$. It turns out that this class of consensus mechanisms\footnote{We refrain from calling these mechanisms algorithms because they are designed to converge and not to terminate.}  
converge to the average of the initial values of the agents and a measure of the speed of convergence is given by the second largest eigenvalue in absolute value of the matrix $P$. More general necessary and sufficient conditions for achieving consensus with  synchronous mechanisms, including cases where the matrix $P$ is time-varying, have been studied in~\cite{Tsitsiklis1984ProblemsinDecentralized,MurrayOlfati} (see the book for a complete overview~\cite{Magnusbook2010}). Sufficient conditions for achieving consensus with message delays and losses has been developed in~\cite{Tsitsiklis:1986,Bertsekas:1989:PDC:59912} and more recently, a theorem prover-based verification framework for these mechanisms has been presented in~\cite{MC:TPHOLS08,MC:FORMATS08}.
Furthermore, stochastic variants of the convergence mechanism under the presence of communication noises has been studied in~\cite{Xiao:2007:DAC:1222667.1222952,Minyi10}.

In this paper we study the {\em private consensus\/} problem which requires the agents to preserve the privacy of their initial values from an adversary who can see all the messages being exchanged, while also achieving convergence to the average of the initial values. The notion of privacy used in this paper is derived from the idea of differential privacy, first introduced in~\cite{DiffPri:Dwork06} (see~\cite{Dwork:2008:DPS:1791834.1791836} for a survey) in the context of ``one-shot'' computations on statistical databases. Roughly speaking, differential privacy ensures that the removal (or addition) of a single participant  from a database does not affect the output of any analysis {\em substantially\/}. It follows that an adversary looking at the output of any analysis cannot threaten to breach the privacy and security of individual participants. 

In~\cite{Dwork10}, the notion of differential privacy is expanded along two dimensions. First, it included  streaming and online computations in which the adversary can look at the entire sequence of outputs from the analysis algorithm. Secondly, it allowed the adversary to look at the internal state of the algorithm (Pan privacy) in addition to the communication messages.

This work is motivated by {\em closed-loop\/} applications where the output of the analysis is used as a feedback by the participating agents in updating their states. As a starting point in this investigation, we use a client-server setup for iterative consensus. The clients are the agents with private initial values. In each round, the clients send some information to the server based on their current state, the server updates its own state based on clients' information and sends feedback to the clients. Finally, the clients update their state according to some local control law based on the server's feedback. 
The clients require to converge, while their initial values should be protected from any honest but curious adversary with access to the messages (between  the clients and the server) as well as the server's internal state. We call this the {\em Synchronous Private Consensus (SPC)\/} problem. 

In many  distributed control systems, protected initial information imply protection of the current state. For example, consider a platoon of vehicles which require to move as a group with the same speed, while keeping their positions private. If the agents use a solution to the SPC problem for deciding on the common speed, then their initial velocities as well as their positions will be protected even if their initial positions and control laws are compromised.  

%Algorithm
In Section~\ref{sec:pconsensus} we propose a randomized mechanism for solving the SPC problem. 
The key idea is to add a particular type of random noise to the clients' messages to  the server. 
Specifically, for a client with internal state $\theta(t)$ at round $t$, the message it sends to the server is $\theta(t) + \eta(t)$ where $\eta(t)$ is a random (real) number chosen according to a Laplace distribution with a parameter that decays geometrically with $t$. In contrast, the noise values added in~\cite{Dwork10} for implementing an approximate online counter are always chosen from the same Laplace distribution. 
The feedback $y(t)$ provided by the server is the mean of the noisy messages it receives.
And, the clients update their states by taking a linear combination of $y(t)$ and their earlier state.  
This weighted average is an example of a simple type of client dynamics. 

In Section~\ref{sec:distributed}, we generalize the client-server mechanism to a distributed setting
where the adversary can access the messages and the states of a subset of compromised clients. 
The mechanism guarantees differential privacy of the good clients and we 
derive a sufficient condition for convergence based on the communication pattern of the clients. 

As randomization is used for achieving privacy, this mechanism guarantees convergence to the average in a probabilistic sense: Given a probability $b$ and a radius $r$, we say that the mechanism is $(b,r)$-{\em accurate\/} if from any initial state, with probability $(1-b)$ the system converges to a value within $r$ distance of the average.  In Section~\ref{sec:discussion}, we discuss the tradeoff between privacy and accuracy. There are two parameters in the definition of the mechanism which can be chosen to get different levels of privacy and accuracy. If these parameters are tuned to obtain $\epsilon$-differential privacy, then we show that the accuracy that can be achieved is $(b, O(\frac{1}{\epsilon\sqrt{b N}})$. That is, the accuracy radius depends inversely on privacy level ($\epsilon$) and the accuracy probability $(1-b)$, and directly on $\sqrt{N}$. 

The rest of the paper is organized as follows. In Section~\ref{sec:prelims}, we introduce the synchronous private consensus problem, and then formally define differential privacy, convergence, and accuracy. In Sections~\ref{sec:pconsensus} and~\ref{sec:distributed}, we present and analyze the client-server and the distributed mechanisms for SPC. In Section~\ref{sec:related}, we compare our work with existing research papers in this area. In Section~\ref{sec:conclusion}, we summarize our results and discuss possible future directions.

% Covergence 

% Accuracy

\section{Preliminaries}
\label{sec:prelims}

For a natural number $N \in \naturals$, we denote the set $\{1, \ldots, N\}$ by $[N]$. 
For an $S$-valued vector $\theta$ of length $N$, and $i \in [N]$, we denote the $i^{th}$ component by $\theta_i$.

The mechanisms presented in this paper rely on random real numbers drawn according to the Laplace distribution.
$Lap(b)$ denotes the Laplace distribution with probability density given by $p_L(x|b) = \frac{1}{2b}e^{-|x|/b}$.
%\sayan{Facts about Laplace distribution go here.}
This distribution has mean 0 and variance $2b^2$. 
For any $x,y\in \reals$, $\frac{p_L(x|b)}{p_L(y|b)} \leq e^{\frac{|x-y|}{b}}$.

\subsection{Problem Statement}
\label{sec:problem}

We state the  {\em synchronous private consensus (SPC)\/} problem in the following setting. The system consists of $N$ {\em clients\/} with private initial values 
$\theta_1(0), \ldots, \theta_N(0)$ and one {\em server\/}. The clients and the server may have internal states and they communicate over channels. In each round, there are four phases: First, the clients send some messages to the server; next, the server performs computations to update its state; then it responds to the clients with some messages, and finally, the clients smoothly update their own internal states based on the response from the server.

Several vulnerabilities threaten to compromise the private initial values of the clients: (1) An intruder can have full access to all the communication channels. That is, he can peek inside all the messages going back and forth between he clients and the server. Furthermore, (2) the intruder can access the server's internal state. 

Roughly, a randomized mechanism for the clients and the server {\em solves\/} the synchronous private consensus problem if eventually all the clients converge to the average of their initial values with high probability and it guarantees that the intruder cannot learn about the initial private client values with any high level of confidence. We proceed to precisely define accuracy, convergence, and privacy.

Our definition of privacy is a modification of the notion of {\em differential privacy\/} introduced in~\cite{Dwork10} in the context of streaming algorithms.
Let $\Theta \subseteq \reals$ be the domain of individual internal states and messages. 
\begin{definition}[Adjacency]
\label{def:adj}
Two vectors $\theta, \theta'\in \Theta^N$ are $\delta$-{\em adjacent}, for some $\delta \geq 0$,  
if there exists one $i \in [N]$, such that $|\theta_i - \theta'_i| \leq \delta$  
and for all $j \neq i$, $\theta_j = \theta'_j$. 
\end{definition}
%Define metric between adjacent global state by the 1 norm of their different entry, $m(\theta, \theta')=|\theta_i,\theta'_i|$.
%Now we are ready for stating the definition of privacy.
\begin{definition}[Differential Privacy]
\label{def:pri}
Let $\Theta^N \subset \reals^N$ be the domain of global state equipped with metric $m(\cdot,\cdot)$. 
Let $X$ be the set of all possible message sequences and $Y$ be the set of all possible sequences of internal states of \Alg. 
A randomized mechanism preserves {\em $\epsilon$-differential privacy\/}  if for all sets $X' \subseteq X$ and $Y' \subseteq Y$, and 
for all pairs of $\delta$-{\em adjacent\/} initial global states $\theta, \theta'\in \Theta^N$ 
\[
Pr[{\bf Alg}(\theta) \in (X',Y')] \leq e^{\epsilon \delta} Pr[{\bf Alg}(\theta') \in (X', Y')].
\]
\end{definition}

%interact with a possibly untrusted party \Alg.
We use the standard mean square notion of convergence which has been used in the context of consensus protocols~\cite{Minyi10}.
Let $\theta_i(t)\in \reals$ be the local states of agent $A_i$ at the beginning of round $t$.
$\theta_i(0)$ denotes the secrete initial state of $A_i$.

\begin{definition}[Convergence]
\label{def:con}
A randomized mechanism is said to converge if for any initial configuration, for any $i, j\in [N]$,
 $\lim_{t\rightarrow \infty}E[(\theta_i(t)-\theta_j(t))^2]=0$, where the expectation is over the coin-flips of the algorithm.
\end{definition}

%\sayan{Comment: This is one definition of convergence. What are other possibilities?}

%\begin{figure}[h]
	%\label{fig:consensus}
	%\centering
	%\includegraphics[width=0.5\textwidth]{Consensus.png}
	%\caption{Agents want to consensus without revealing his initial state.}
%\end{figure}

%To make our result more general, we make no assumption on the security of the aggregate controller.

\begin{definition}[Accuracy]
\label{def:acc}
For any initial state $\theta(0)$, $b \in [0,1]$ and $r \in \nnreals$ a randomized mechanism is said to achieve $(b,r)$-{\em accuracy\/} 
if every execution starting from $\theta(0)$ converges to a state within $r$ of $\frac{1}{N} \sum_i \theta_i(0)$, with probability at least $1-b$.
\end{definition}
%
%\sayan{Restate the synchronous private consensus problem more precisely using the above definitions.}
Our goal is to design a solution to the SPC problem that guaranteed to be converge.
In addition, for an adversary, looking at all a sequence of messages passing through the channels as well as a sequence of internal states of the server (and possibly some of the clients), the probability of executions corresponding to adjacent initial local states and these sequences have to be related by the Equation in Definition~\ref{def:con}. 
%Also of probabilistic closeness. Last, we will discuss the distance between the consensus value to the initial average.
\section{A Client-Server Mechanism and its Analysis}
\label{sec:pconsensus}

In this section, we present a randomized mechanism for solving the synchronous private consensus problem. This mechanism has three parameters $\sigma \in (0,1)$, $c$ and $q \in (0,1)$.
 The mechanism is specified by the following client and server actions which define the four phases of each round. Let $\T = \{0\} \cup \naturals$ be the infinite time domain.
At each round $t\in \T$:

%each agent and \Alg interact in three steps. We assume that the dynamics of agents is linear time invariant. In other words, in step (iii) of Example~\ref{ex:con}, $k\in(0,1)$ is a constant.
%Then chose a parameter $q\in(0,k)$.
\begin{enumerate}[(i)]
\item Client $i$ sends a message $x_i(t)=\theta_i(t) + \eta_i(t)$ to the server, where $\eta_i(t)$ is a random noise generated  from the distribution $Lap(cq^t)$. 
\item The server updates its own state as the average of all client messages $y(t)= \frac{1}{N} \sum_i x_i(t)$.
\item The server sends $y(t)$ to all clients. 
\item Client $i$ updates its state by linearly interpolating between $\theta_i(t)$ and $y(t)$ with coefficient $\sigma$, that is, 
\begin{equation}
\label{eq:upd}
\theta_i(t+1) = (1-\sigma)\theta_i(t)+ \sigma y(t).
\end{equation}
\end{enumerate}

\subsection{Analysis}
\label{sec:analysis}
For $t\in \T$, let $\theta(t) = [\theta_1(t),\dots,\theta_N(t)]^T$ be the vector defining the state of the clients at the beginning of round $t$. 
Similarly, $\eta(t)$ and $x(t)$ are vectors for noise and messages. 
An {\em execution\/} of the mechanism  is an infinite sequence of the form $\alpha = \theta(0), (\eta(0), x(0), y(0)), \theta(1), (\eta(1), x(1), y(1)), \ldots$. 
Observe that given a initial vector $\theta(0)$ and the sequence of noise vectors $\eta(0), \eta(1), \ldots$, the execution of the system is completely specified. That is, for all $t \in \T$, it defines the messages $x(t), y(t)$,  the internal states of the clients $\theta(t)$ and that of the server $y(t)$.  
Thus, for brevity we will sometimes write an execution $\alpha$ as an infinite sequence of the form $\theta(0), \eta(0), \theta(1), \eta(1), \ldots$. The prefix of $\alpha$ upto round $T\in \T$ is denoted by $\alpha_T$.
We denote the set of possible executions from $\theta(0)$ as $\exec{\theta(0)}$.

For a given execution $\alpha$, the adversary can observe the subsequence of messages $x(t), y(t)$ and the server's state $y(t)$. 
We denote this subsequence by $\alpha \downarrow {(x,y)}$. 
Hence, two executions $\alpha$ and $\alpha'$ are indistinguishable to an adversary if 
$\alpha \downarrow {(x,y)} = \alpha' \downarrow {(x,y)}$.
For a set of observation sequnces $Obs$, the set of all possible executions from $\theta(0)$ which correspond to some observation in $Obs$ is the set $\exec{\theta(0),Obs} \deq \{\alpha \in \exec{\theta(0)} | \alpha \downarrow (X,Y) \in Obs\}$. 
We restate the definition of differential privacy in this context. 

\begin{definition}[Differential Privacy]
\label{def:pri}
%Let $\Theta^N \subset \reals^N$ be the domain of global state equipped with metric $m(\cdot,\cdot)$. 
A randomized mechanism preserves {\em $\epsilon$-differential privacy\/}  if for any set of observation sequnces $Obs$, 
and any  pairs of $\delta$-adjacent initial global states $\theta(0), \theta'(0) \in \Theta^N$ 
\begin{equation}
\label{eq:pri}
Pr[\exec{\theta(0),Obs}] \leq e^{\epsilon \delta} Pr[\exec{\theta'(0),Obs}].
\end{equation}
\end{definition}

%From the point of view of the adversary, $\eta(t)

\begin{lemma}[Privacy]
\label{lem:pri}
For $q \in (1-\sigma,1)$, the mechanism guarantees $\epsilon$-differential privacy
with $\epsilon = \frac{q}{c(q+\sigma -1)}$.
\end{lemma}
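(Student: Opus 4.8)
The plan is to set up a measure-preserving bijection between the noise sequences that, from two $\delta$-adjacent initial states, produce the \emph{same} observation, and then to bound the Laplace density ratio along this correspondence.

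First I would pin down what the adversary actually sees. Since $y(t)=\frac{1}{N}\sum_i x_i(t)$ is a deterministic function of the messages, the observation is equivalent to the full message sequence $x(0),x(1),\dots$. The crucial structural fact is that, \emph{given} an initial state $\theta(0)$ and an observation $o=(x(t))_{t\in\T}$, the driving noise is uniquely recovered: the client recursion~\eqref{eq:upd} driven by the observed $y(s)=\frac{1}{N}\sum_j x_j(s)$ determines $\theta_i(t)=(1-\sigma)^t\theta_i(0)+\sigma\sum_{s<t}(1-\sigma)^{t-1-s}y(s)$, and then $\eta_i(t)=x_i(t)-\theta_i(t)$. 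Read the other way, the map $\eta\mapsto x$ (for fixed $\theta(0)$) is affine and causal: $x_i(t)$ depends on $\eta_j(s)$ only for $s\le t$, and $\partial x_i(t)/\partial\eta_j(t)=\delta_{ij}$ because $\theta_i(t)$ does not involve $\eta(t)$. Hence its Jacobian is block lower-triangular with identity diagonal blocks, so its determinant is $1$, and the density of an observation $o$ under $\theta(0)$ equals the Laplace density $p(\eta)$ of the noise it reconstructs.

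Next I would compare the two reconstructed noise sequences $\eta$ (from $\theta(0)$) and $\eta'$ (from $\theta'(0)$) for a \emph{fixed} observation $o$, where $\theta(0),\theta'(0)$ are $\delta$-adjacent and differ only in coordinate $k$ with $d=\theta_k(0)-\theta'_k(0)$, $|d|\le\delta$. For $j\ne k$ the reconstructed states coincide, so $\eta_j=\eta'_j$, and the entire discrepancy lives in coordinate $k$. Here is the key cancellation: because $o$ is held fixed, the observed $y$-terms in the recursion are identical for both trajectories, so the state difference obeys the homogeneous, noise-free recursion $\theta_k(t+1)-\theta'_k(t+1)=(1-\sigma)\bigl(\theta_k(t)-\theta'_k(t)\bigr)$, giving $|\eta_k(t)-\eta'_k(t)|=(1-\sigma)^t|d|$. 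Thus the per-round perturbation decays geometrically in $t$.

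Finally I would bound the density ratio and integrate. Applying the Laplace bound $p_L(x|b)/p_L(y|b)\le e^{|x-y|/b}$ with $b=cq^t$ coordinatewise, $\frac{p(\eta)}{p(\eta')}\le\exp\!\Big(\frac{|d|}{c}\sum_{t\ge0}\bigl(\tfrac{1-\sigma}{q}\bigr)^t\Big)$. The hypothesis $q\in(1-\sigma,1)$ is exactly what makes this geometric series converge, and it sums to $\frac{q}{q+\sigma-1}$; with $|d|\le\delta$ this yields the pointwise ratio bound $e^{\epsilon\delta}$ for $\epsilon=\frac{q}{c(q+\sigma-1)}$. Integrating this density inequality over any observation set $Obs$ then gives $Pr[\exec{\theta(0),Obs}]\le e^{\epsilon\delta}\,Pr[\exec{\theta'(0),Obs}]$, as claimed. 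I expect the main obstacle to be the measure-theoretic step of making the change-of-variables argument rigorous over infinite-horizon sequences (for instance, arguing on finite prefixes $\alpha_T$ and passing to the limit), together with getting the sign/direction of the perturbation and the role of the fixed observation exactly right; the geometric-series computation itself is routine once the decay rate $(1-\sigma)^t$ is established.
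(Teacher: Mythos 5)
Your proposal is correct and takes essentially the same route as the paper's proof: both rest on an observation-preserving correspondence between noise sequences from the two $\delta$-adjacent initial states, with the coordinate-$k$ perturbation decaying as $\delta(1-\sigma)^t$, the per-round Laplace ratio bound $p_L(x|b)/p_L(y|b)\leq e^{|x-y|/b}$ with $b=cq^t$, and the geometric series $\sum_{t\geq 0}\left(\frac{1-\sigma}{q}\right)^t=\frac{q}{q+\sigma-1}$ converging precisely because $q\in(1-\sigma,1)$, yielding $\epsilon=\frac{q}{c(q+\sigma-1)}$. The only difference is presentational: the paper pushes the noise forward through an explicit bijection $f$ and verifies by induction that the two executions produce identical messages, whereas you invert the map---reconstructing the noise from a fixed observation and noting the causal, unit-Jacobian structure---which is the same correspondence viewed from the other side and, if anything, makes the change-of-variables step slightly more explicit than the paper's treatment.
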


\begin{proof}
%Consider only the infinite horizon.
Let $\theta(0)$ and $\theta'(0)$ be arbitrary $\delta$-adjacent initial global states.
Without loss of generality, we assume that for some $k \in [N]$, $\theta_k(0) = \theta'_k(0) + \delta$.
%Let $E$ and $E'$ be the sets of infinite executions starting from $\theta(0)$ and $\theta'(0)$.
Fix any subset of observation sequences $Obs$. We will show that Equation~\eqref{eq:pri} holds by 
establishing a bijective correspondence between the executions in $\exec{\theta(0),Obs}$ and $\exec{\theta'(0),Obs}$. 
For brevity, we denote these sets by $A$ and $A'$.

First, we define a bijection $f: A \mapsto A'$. 
For $\alpha \in A$ defined by the sequence $\eta(0), \eta(1), \ldots$, 
we define $f(\alpha) \deq \theta'(0), (\eta'(0), x'(0), y'(0)), \theta'(1), (\eta'(1), x'(1), y'(1)), \theta'(2), \ldots$,\\  
where for each 
$t \in \T$, 
\begin{eqnarray*}
\eta'_i(t) &=&
 \left\{
	\begin{array}{ll}
		\eta_i(t) + \delta (1- \sigma)^t & \mbox{for $i=k$},\\
		\eta_i(t) 											 & \mbox{otherwise}.
		\end{array}
		\right.
\end{eqnarray*}		
$x'(t) = \theta'(t) + \eta'(t)$, $y'(t) = \frac{1}{N}\sum_{i \in [N]} x'(t)$, and for $t > 0$
$\theta'(t) = (1 - \sigma) \theta'(t-1) + \sigma y'(t)$. Clearly, $f(\alpha)$ is a valid execution of the mechanism staring from $\theta'(0)$. 

The following proposition relates the states and the observable vectors of two corresponding executions. 
\begin{prop}
For all $t \in \T$, $i \in [N]$, 
\begin{enumerate}[(i)]
%\item $\theta'_i(t+1) - \theta_i(t+1) = (1- \sigma)(\theta'_i(t) - \theta_i(t))$,
% Zhenqi: I choose the following statement because it makes the verification of the base case easier. We can also switch back to the previous alternative.
\item $\theta_k(t) - \theta'_k(t) = \delta (1-\sigma)^t$,
\item $\theta_i(t) = \theta'_i(t),\forall i \neq k$
\item $x'_i(t) = x_i(t)$,
\item $y'(t) = y(t)$.
\end{enumerate}
\end{prop} 
\begin{proof}
The proof is by induction on $t$. For the base case $t=0$, observe that
%{\bf Base case:}
%\begin{enumerate}[(i)]
%\item $\theta'_k(0) - \theta_k(0) = \delta$;
%\item $\theta'_i(0) = \theta_i(0) \forall i \neq k$;
%\item 
for $i=k$, $x'_i(0) = \theta'_i(0) + \eta'_i(0) = \theta_i(0) -\delta + \eta_i(0) + \delta =
	 x_i(0)$, otherwise, $x'_i(0) = \theta'_i(0) + \eta'_i(0) = \theta_i(0) + \eta_i(0) = x_i(0)$; 
%\item $y(t)= \frac{1}{N} \sum_i x_i(t) = \frac{1}{N} \sum_i x'_i(t)= y'(t)$.
%\end{enumerate}

%{\bf Induction:}
For the inductive step, assume that the proposition holds for all $t\leq T$.
From Equation~\ref{eq:upd}, we have $\theta'_k(T+1) = (1-\sigma)\theta'_k(T)+ \sigma y'(T)$ and
$\theta_k(T+1) = (1-\sigma)\theta_k(T)+ \sigma y(T)$. The difference of these two equation gives
$\theta'_k(T+1) - \theta_k(T+1)$ 
\begin{equation*}
\begin{array}{ll}
 &= (1-\sigma)(\theta'_k(T) - \theta_k(T)) + \sigma (y'(T)-y(T))\\
 & = (1-\sigma)(\theta'_k(T) - \theta_k(T)) =\delta (1-\sigma)^{T+1}.
\end{array}
\end{equation*}
For any other client $i \neq k$, immediately from that $y'(T) = y(T)$ and $\theta'_i(T) = \theta_i(T)$, we have $\theta_i(T+1) = \theta_i(T+1)$.

Now we consider the clients' reports $x(T+1)$.
For the $k^{th}$ client, $x'_k(T+1) = \theta'_k(T+1) + \eta'_k(T+1) = \theta_k(T+1) -\delta(1-\sigma)^{T+1} + \eta_k(T+1) + \delta(1-\sigma)^{T+1} = x_k(T+1)$.
For the other client $i \neq k$, $x'_i(T+1) = \theta'_i(T+1) + \eta'_i(T+1) = \theta_i(T+1) + \eta_i(T+1) = x_i(T+1)$.
So the reports $x'(T+1) = x(T+1)$. The match up of the server's internal state immediately follows.
%By induction, the proof of the proposition complete.
\end{proof}
Parts $(iii)$ and $(iv)$ of the above proposition establishes that $\alpha$ and $f(\alpha)$ are indistinguishable, that is, indeed they produce the same observation sequence.

Next we will relate the probability of any finite prefix of an {\em individual\/} execution $\alpha \in A$, 
and its corresponding execution $f(\alpha) \in A'$, for a particular observation sequence $\beta \in Obs$: 
\begin{eqnarray*}
&&\frac{Pr[\alpha_T = \theta(0), \ldots, \theta(T)]}{Pr[(f(\alpha))_T = \theta(0), \ldots, \theta(T)]} \\ 
&=& \prod_{t =0}^{T-1} \prod_{i \in [N]} \frac{p_L(\eta'_i(t)|cq^t)}{p_L(\eta_i(t)|cq^t)} 
= \prod_{t =0}^{T-1} \frac{p_L(\eta'_k(t)|cq^t)}{p_L(\eta_k(t)|cq^t)} \\
&\leq& \prod_{t =0}^{T-1} e^{\frac{|\eta'_k(t) - \eta'_k(t)|}{cq^t}} = \prod_{t =0}^{T-1} e^{\frac{\delta}{c} \left(\frac{1-\sigma}{q}\right)^t}.
\end{eqnarray*}
Integrating over all executions $\alpha \in A$, we get 
\begin{eqnarray*}
&&\int_{\alpha \in A} Pr[\alpha_T = \theta(0), \ldots, \theta(T)] d\mu \\
&\leq& \prod_{t =0}^{T-1} e^{\frac{\delta}{c} \left(\frac{1-\sigma}{q}\right)^t} 
\int_{f(\alpha) \in A'} Pr[(f(\alpha))_T = \theta'(0), \ldots, \theta'(T)] d\mu', 
\end{eqnarray*}
where $d\mu$ and $d\mu'$ are probability measures over $A$ and 
$A'$ defined by the randomized mechanism. 
If $q \in (1-\sigma,1)$, then as $T \rightarrow \infty$, 
the product  converges to $e^{\epsilon \delta}$, where $\epsilon = \frac{q}{c(q+\sigma -1)}$,
and we obtain the required inequality for $\epsilon$-differential privacy.
\begin{eqnarray*}
Pr[\exec{\theta(0), Obs}] \leq  e^{\epsilon \delta} Pr[\exec{\theta'(0), Obs}].
\end{eqnarray*}
\end{proof}

\begin{lemma}[Convergence]
\label{lem:con}
The mechanism described above achieves convergence.
\end{lemma}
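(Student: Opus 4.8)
The plan is to track the pairwise disagreement $d_{ij}(t) \deq \theta_i(t) - \theta_j(t)$ rather than the individual states, and to show it contracts geometrically to zero independently of the injected noise. The crucial observation is that, by steps (ii)--(iii) of the mechanism, every client receives the \emph{same} server feedback $y(t) = \frac{1}{N}\sum_{\ell} x_\ell(t)$ in round $t$. Hence, subtracting the update rule~\eqref{eq:upd} for client $j$ from that for client $i$, the common term $\sigma y(t)$ cancels and leaves
\[
d_{ij}(t+1) = \theta_i(t+1) - \theta_j(t+1) = (1-\sigma)\big(\theta_i(t) - \theta_j(t)\big) = (1-\sigma)\, d_{ij}(t).
\]
The noise $\eta(t)$ enters only through $y(t)$, which is identical across clients, so it has no effect whatsoever on the disagreement; in particular $d_{ij}(t)$ is a deterministic function of the initial states.

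Unrolling this recursion from round $0$ yields $d_{ij}(t) = (1-\sigma)^t\,(\theta_i(0) - \theta_j(0))$ for all $t \in \T$. Since $\sigma \in (0,1)$ we have $(1-\sigma) \in (0,1)$, so $(1-\sigma)^{2t} \to 0$ as $t \to \infty$. Because $d_{ij}(t)$ is deterministic, the expectation in Definition~\ref{def:con} equals the square of this quantity, and
\[
E\big[(\theta_i(t) - \theta_j(t))^2\big] = (1-\sigma)^{2t}\,(\theta_i(0) - \theta_j(0))^2 \longrightarrow 0
\]
as $t \to \infty$, for every pair $i,j \in [N]$ and every initial configuration, which is exactly the convergence condition.

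The point worth emphasizing is that there is essentially no obstacle here: the geometrically decaying Laplace noise is what makes \emph{privacy} (Lemma~\ref{lem:pri}) and \emph{accuracy} (Definition~\ref{def:acc}) delicate, but it is irrelevant to mere \emph{agreement}, precisely because the server broadcasts a single value to all clients. The only care needed is to keep the two roles of $y(t)$ separate: it shifts the common value toward which the clients move, but it drops out of their pairwise differences, so the contraction factor $(1-\sigma)$ governs the disagreement deterministically. If one instead wanted convergence to a limiting random variable rather than to agreement, one would additionally have to control the accumulated noise using $\mathrm{Var}(\eta_i(t)) = 2c^2 q^{2t}$ and the summability coming from $q<1$; but Definition~\ref{def:con} asks only for the disagreement to vanish, and that follows directly from the contraction above.
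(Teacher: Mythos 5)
Your proof is correct. It establishes exactly the condition of Definition~\ref{def:con}: since every client receives the identical broadcast $y(t)$, the term $\sigma y(t)$ cancels when you subtract the update rule~\eqref{eq:upd} for client $j$ from that for client $i$, giving the deterministic recursion $d_{ij}(t+1)=(1-\sigma)d_{ij}(t)$ and hence $E[(\theta_i(t)-\theta_j(t))^2]=(1-\sigma)^{2t}(\theta_i(0)-\theta_j(0))^2\rightarrow 0$. The paper proves the same fact by a heavier route: it defines the potential $P(t)=\theta(t)^T L\,\theta(t)$ with $L=N I_N-\mathbbm{1}_{NN}$, writes the dynamics in matrix form $\theta(t+1)=\theta(t)-\frac{\sigma}{N}L\theta(t)+\frac{\sigma}{N}w(t)\mathbbm{1}_N$ with $w(t)=\sum_i\eta_i(t)$, and shows $P(t+1)=(1-\sigma)^2P(t)$ after the noise cross-terms vanish because $L\mathbbm{1}_N=0$ and the identities $LL=NL$, $LLL=N^2L$ collapse the quadratic form. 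The underlying observation is the one you isolate---the noise enters only through the common feedback and drops out of all relative distances, so contraction is deterministic (the paper makes this remark explicitly after its proof)---but you reach it by direct scalar subtraction, which immediately implies the paper's $P(t+1)=(1-\sigma)^2P(t)$ and is the cleaner argument for this lemma in isolation. What the paper's Laplacian-and-potential machinery buys is reuse: the identical setup carries over to the distributed mechanism of Section~\ref{sec:distributed} (Lemma~\ref{lem:con_dis}), where each client computes its own $y_i(t)$ from a different neighborhood, the noise no longer cancels from pairwise differences, and one must settle for the expected-decay bound $E[P(t+1)]\leq(1-a)E[P(t)]+\lambda_N M^2 E[w(t)^Tw(t)]$ with the variance term summable thanks to $q<1$---precisely the scenario your closing paragraph anticipates, and one where your pairwise cancellation argument would not extend.
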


\begin{proof}
We define a global potential function $P: \naturals \rightarrow \nnreals$ as 
$P(t) = \frac{1}{2}\sum_{i\neq j} [\theta_i(t)-\theta_j(t)]^2$. 
Using the matrix notation $P(t) = \theta(t)^T \ L \ \theta(t)$,
where $L\in \reals^{N \times N}$ with elements:
\begin{equation}
\label{eq:L}
l(i,j) =  \left\{
\begin{array}{ll}
N-1, & i=j, \\
-1, & \mbox{otherwise}.
\end{array}
\right.
\end{equation}
The transition rule for the internal state of the $i^{th}$ client can be written as: 
\begin{equation}
\label{eq:loc_upd}
\begin{array}{rrl}
\theta_i(t+1) &=&  (1-\sigma) \theta_i(t) + \frac{\sigma}{N} \sum_{i=1}^{N} (\theta_i(t) + \eta_i(t)) \\
&=&  (1+\frac{\sigma}{N}-\sigma) \theta_i(t) + \frac{\sigma}{N} \sum_{j\neq i}\theta_j(t)+ \frac{\sigma}{N} w(t),
\end{array}
\end{equation}
where $w(t) = \sum_i \eta_i(t)$.  
The update rule for all the agents can be written as 
$\theta(t+1) = \theta(t) - \frac{\sigma}{N} L \theta(t) + \frac{\sigma}{N}  w(t)\mathbbm{1}_N$. Then,
\begin{equation}
\begin{array}{rl}
P(t+1) &= \theta(t+1)^TL\theta(t+1) \\
 &=[\theta(t) - \frac{\sigma}{N} L \theta(t) + \frac{\sigma}{N} w(t)\mathbbm{1}_N]^TL \\
 & \quad       [\theta(t) - \frac{\sigma}{N} L \theta(t) + \frac{\sigma}{N} w(t)\mathbbm{1}_N]\\
 &= P(t) - 2\frac{\sigma}{N}\theta(t)^TLL\theta(t) + \frac{\sigma^2}{N^2} \theta(t)^TLLL\theta(t)\\
 & \quad + 2\frac{\sigma}{N}w(t)\theta(t)^TL\mathbbm{1}_N- \frac{2\sigma^2}{N^2}w(t)\theta(t)^TLL\mathbbm{1}_N  \\
 & \quad  +\frac{\sigma^2}{N^2}w(t)^2 \mathbbm{1}^T_N L\mathbbm{1}_N. \\
 & \label{eq:lya}
=  P(t) - 2\frac{\sigma}{N}\theta(t)^TLL\theta(t) + \frac{\sigma^2}{N^2} \theta(t)^TLLL\theta(t).
 \end{array}
\end{equation}
By Equation~\ref{eq:L} we have $L = N I_N - \mathbbm{1}_{NN}$. So in this particular case, we have $L L = (N I_N - \mathbbm{1}_{NN})^2 = N^2I_N-2N\mathbbm{1}_{NN} + \mathbbm{1}_{NN}^2 = N^2I_N-N\mathbbm{1}_{NN} = N L$. Similarly $LLL = N^2L$. Substitute the previous equation into Equation~\eqref{eq:lya} we get,
\[
P(t+1) = (1-2\sigma+\sigma^2)P(t) = a P(t),
\]
where $a = (1-\sigma)^2$.
For all $\sigma\in (0,1)$, $a\in(0,1)$ is a constant.
Thus we have as $t \rightarrow \infty$, $P(t)$ converges exponentially to $0$, which implies convergence.
\end{proof}
From Equation~\eqref{eq:loc_upd}, each agent adds an {\em identical} random variable $\frac{\sigma}{N}w(t)$ to its local state in round $t$.
Although the average value drifts with this random variable, the relative distance between local states will not be affected. 
As a result, the mechanism converges deterministically.  

\begin{lemma}[Accuracy]
\label{lem:acc}
For any $b \in (0,1)$, the randomized mechanism achieves $(b,\frac{\sqrt{2}c\sigma}{\sqrt{b N(1-q^2)}}))$-accuracy.
\end{lemma}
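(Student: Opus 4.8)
The plan is to exploit the structure revealed by the proof of Lemma~\ref{lem:con}: since every client adds the \emph{same} noise term $\frac{\sigma}{N}w(t)$ in round $t$, all clients converge to a common limit, and the only way this limit can differ from the true average $\bar\theta(0) \deq \frac{1}{N}\sum_i \theta_i(0)$ is through the accumulated common drift. Thus accuracy reduces to controlling the tail of a single scalar random variable.

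First I would track the average $\bar\theta(t) = \frac{1}{N}\mathbbm{1}_N^T \theta(t)$. Multiplying the update rule $\theta(t+1) = \theta(t) - \frac{\sigma}{N} L \theta(t) + \frac{\sigma}{N} w(t)\mathbbm{1}_N$ on the left by $\frac{1}{N}\mathbbm{1}_N^T$, and using $\mathbbm{1}_N^T L = 0$ (each row and column of $L$ sums to zero) together with $\mathbbm{1}_N^T \mathbbm{1}_N = N$, gives $\bar\theta(t+1) = \bar\theta(t) + \frac{\sigma}{N} w(t)$, hence $\bar\theta(t) = \bar\theta(0) + \frac{\sigma}{N}\sum_{s=0}^{t-1} w(s)$. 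Combined with Lemma~\ref{lem:con} — which, via the identity $P(t) = N\sum_i (\theta_i(t)-\bar\theta(t))^2$, forces $\theta_i(t) - \bar\theta(t) \to 0$ for every $i$ — this shows that every client converges to the common value $\theta^* = \bar\theta(0) + D$, where $D \deq \frac{\sigma}{N}\sum_{s=0}^{\infty} w(s)$ and $w(s) = \sum_i \eta_i(s)$.

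Next I would analyze $D$ as a random variable. The noises $\eta_i(s)\sim Lap(cq^s)$ are independent across $i$ and $s$, each with mean $0$ and variance $2c^2 q^{2s}$, so $w(s)$ has mean $0$ and variance $2Nc^2 q^{2s}$. Because $\sum_s q^{2s} = \frac{1}{1-q^2} < \infty$ for $q \in (0,1)$, the per-round variances are summable, so $E[D]=0$ and $\mathrm{Var}(D) = \frac{\sigma^2}{N^2}\sum_{s=0}^{\infty} 2Nc^2 q^{2s} = \frac{2c^2\sigma^2}{N(1-q^2)}$. Since $|\theta^* - \bar\theta(0)| = |D|$, Chebyshev's inequality gives $Pr[|D| \geq r] \leq \mathrm{Var}(D)/r^2$; setting this bound equal to $b$ and solving for $r$ yields exactly $r = \frac{\sqrt{2}c\sigma}{\sqrt{bN(1-q^2)}}$, the claimed accuracy radius, and the complementary event $|D| < r$ has probability at least $1-b$.

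The main obstacle is the probabilistic bookkeeping for the infinite random series defining $D$: I must justify that the partial sums of the $w(s)$ converge almost surely (so that ``the execution converges to a state'' is meaningful with probability one) and that summing the per-round variances is legitimate. Both follow from the geometric decay $q^{2s}$ making the variances summable — precisely the decaying-noise feature that distinguishes this mechanism from a fixed-noise counter — so that Kolmogorov's convergence criterion applies and the sum converges in $L^2$ as well. The remaining steps, namely the average-drift recursion and the Chebyshev estimate, are routine.
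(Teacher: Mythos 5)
Your proposal is correct and takes essentially the same route as the paper: the paper proves this lemma as the special case $\tilde{d}=\frac{\sigma^2}{N}$ of Lemma~\ref{lem:acc_dis}, whose proof likewise tracks the drifting average via $\bar{\theta}(t+1)=\bar{\theta}(t)+\frac{\sigma}{N}w(t)$, sums the geometric variance series to obtain $\frac{2c^2\sigma^2}{N(1-q^2)}$, and applies Chebyshev's inequality, with the convergence lemma supplying collapse of the dispersion around the average. Your appeal to Kolmogorov's criterion for almost-sure and $L^2$ convergence of the noise series only makes rigorous the limiting step $t\rightarrow\infty$ that the paper treats informally.
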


\begin{proof}
This is a special case of a more general proof we show later.
Please see the proof of Lemma~\ref{lem:acc_dis} with $\tilde{d}=\frac{\sigma^2}{N}$ particularly for this case.
%Let us fix an initial state $\theta(0)$ and define $\bar{\theta}(t) = \frac{1}{N} \sum_{i \in [N]} \theta_i(t)$ for any $t \in \naturals$. 
%Taking the average of both sides of Equation~\ref{eq:upd} we get:
%\[
%\bar{\theta}(t+1) = \bar{\theta}(t) + \frac{\sigma}{N} w(t) 
%= \bar{\theta}(0)+\frac{\sigma}{N}\sum_{s=0}^{t} w(s). 
%\]
%
%We have $Var(w(t)) = N Var(\eta_i(t))= 2Nc^2q^{2t}$.
%By $q\in (0,1)$, the series converges. 
%\[
%Var(\sum_{s=0}^t w(s))\leq Var(\sum_{s=0}^\infty w(s))=\frac{2Nc^2}{1-q^2}.
%\]
%By Chebyshev's inequality, for any $t\geq 0$ 
%\[
%Pr(|\bar{\theta}(t)-\bar{\theta(0)}| \leq r) 
%= 1 - Pr(|\sum_{s=0}^t w(s)| > \frac{Nr}{\sigma}) 
%\geq 1-b,
%\]
%where $r = \frac{\sqrt{2}c\sigma}{\sqrt{b N(1-q^2)}}$.
%Let $t\rightarrow \infty$, the lemma follows.
\end{proof}

In this section we proposed an solution to the centralized synchronous consensus problem and 
formally established its privacy, convergence and accuracy properties. We will discuss the trade-offs between privacy and accuracy in Section~\ref{sec:discussion}.

%\begin{lemma}[Lower Bound on Accuracy]
%If an $\epsilon$-differential private consensus mechanism achieves $(\beta,r)$-accuracy 
%and converges to a common value $\theta^*$, the lower bound of accuracy parameter $r$ is
%$\Omega(\frac{1}{N\epsilon} \ln \frac{N(1-\beta)}{\beta})$.
%\end{lemma}

%\begin{proof}
%Fix any initial global state $\theta$. Define $\theta_{i+}$ by adding $Nr>0$ to the $i^{th}$ entry of $\theta$ and keeping all other entries. Define $\theta_{i-}$ similarly by subtracts $Nr$ from the $i^{th}$ entry.

%Each $\theta_{i+}$ or $\theta_{i-}$ has probability at least $$ 
%\end{proof}

\section{A Distributed Mechanism}
\label{sec:distributed}

In this section, we present a second synchronous randomized mechanism for solving the private consensus problem which does not use a server but instead relies on the clients exchanging information with their neighbors in a truly distributed fashion. Let $G = ([N],\E)$ be a {\em undirected connected graph}, where $[N]$ is the set of {\em vertices \/} and $\E\subset [N] \times [N]$ is the set of {\em edges\/}. Let $N(i) = \{j\in [N] | (i,j) \in \E \}$ be the set of {\em neighbors\/} of node $i$ with whom it communicates. Let $|N(i)|$ be the {\em degree\/} of node $i$ in $G$. 
%The Graph $G$ is said to be {\em strongly connected \/} if for any $i\neq j$ in $\N$, there exists a path $P = n_0,n_1,\ldots,n_k, n_l\in \N$ such that 
%\begin{inparaenum}[(i)]
%\item $n_{l-1}\in N(n_l), \forall l\in [k]$; and
%\item $n_0 = i, n_k = j$.
%\end{inparaenum}

%The problem setting is similar to the SPC problem stated in Section~\ref{sec:problem}, 
%except that a client, corresponding to a node in $G$, can only communicate with his neighbors.
%In this distributed setting, there is no server and each client, corresponding to a node in $G$, can communicate with his neighbors directly.
As in the previous setting, an intruder has access to all the communication channels as well as the internal states of a set $C$ of compromised clients (but cannot overwrite them). Our mechanism will protect the privacy of clients who are not compromised. Thus, in this context, Definition~\ref{def:pri} is modified by restricting the notion of $\delta$-adjacency to uncompromised agents.

Now we state a mechanism to solve the distributed SPC problem. 
Besides the state variable $\theta_i$ which holds the consensus value, client $i$ holds another auxiliary state $y_i$.
The mechanism has parameters $\sigma \in (0,1)^N$, $c$ and $q\in(0,1)$. 
Instead of sharing an identical linear combination factor, client $i$ has an independent $\sigma_i \in (0,1)$ which is the $i^{th}$ element of vector $\sigma$. 
At each round $t \geq 0$:
\begin{enumerate}[(i)]
\item Client $i$ sends a message $x_i(t)=\theta_i(t) + \eta_i(t)$ to every $j \in N(i)$, where $\eta_i(t)$ is a random noise generated from the distribution $Lap(cq^t)$. 
\item Client $i$ updates $y_i$ as the average of $x_i(t)$ and the messages it receives: 
\begin{equation}
\label{eq:y_dis}
y_i(t)= \frac{1}{|N(i)|+1} \sum_{j\in N(i) \cup \{i\}} x_j(t).
\end{equation}
\item Client $i$ updates $\theta_i$  by linearly interpolating between $\theta_i(t)$ and $y_i(t)$ with coefficient $\sigma_i$, that is, 
\begin{equation}
\label{eq:upd_dis}
\theta_i(t+1) = (1-\sigma_i)\theta_i(t)+ \sigma_i y_i(t).
\end{equation}
\end{enumerate}

\subsection{Analysis}
\label{sec:ana_dis}

The analysis of the  distributed mechanism parallels the analysis presented in Section~\ref{sec:pconsensus}. 
%Let $y(t)=[y_1(t),y_2(t),\ldots,y_N(t)]^T$ and the vectors $\theta(t)$, $\eta(t)$, and $x(t)$ are defined analogously.
An execution $\alpha$ is defined similar to the centralized setting except that $y(t)$ in this case is a vector rather than a scaler.
The privacy of those corrupted nodes makes no sense. Let $C \subset \N$ be the set of corrupted nodes.

\begin{lemma}[Privacy]
\label{lem:pri_dis}
For $q \in (1-\sigma_m,1)$, where $\sigma_m$ is the minimum element of vector $\sigma$, the distributed mechanism guarantees $\epsilon$-differential privacy with respect to the uncorrupted nodes with $\epsilon = \frac{q}{c(q+\sigma_m-1)}$.
\end{lemma}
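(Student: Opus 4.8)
\textbf{Proof plan for Lemma~\ref{lem:pri_dis}.}
The plan is to mirror the structure of the proof of Lemma~\ref{lem:pri} for the client-server case, replacing the single scalar server feedback $y(t)$ with the vector $y(t)$ of local averages, and restricting adjacency to the uncompromised nodes. Fix two $\delta$-adjacent initial global states $\theta(0)$ and $\theta'(0)$ that differ only in coordinate $k$, where by hypothesis $k \notin C$ is uncorrupted and (without loss of generality) $\theta_k(0) = \theta'_k(0) + \delta$. Fix an arbitrary set of observation sequences $Obs$ and denote $A = \exec{\theta(0),Obs}$ and $A' = \exec{\theta'(0),Obs}$. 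As before, the heart of the argument is to construct an explicit measure-preserving bijection $f: A \mapsto A'$ and then bound the ratio of probability densities along corresponding execution prefixes.

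\textbf{Constructing the bijection.}
I would define $f$ by shifting only the noise of the distinguished client $k$: set $\eta'_k(t) = \eta_k(t) + \delta(1-\sigma_k)^t$ and $\eta'_i(t) = \eta_i(t)$ for $i \neq k$, using $\sigma_k$ in place of the uniform $\sigma$. The key step is an inductive proposition, analogous to the one in Lemma~\ref{lem:pri}, asserting that along the two corresponding executions (a) $\theta_k(t) - \theta'_k(t) = \delta(1-\sigma_k)^t$, (b) $\theta_i(t) = \theta'_i(t)$ for all $i \neq k$, and crucially (c) $x_i(t) = x'_i(t)$ and (d) $y_i(t) = y'_i(t)$ for \emph{every} node $i$, so that the two executions are genuinely indistinguishable. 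The message-matching at node $k$ works exactly as before because the injected noise offset $\delta(1-\sigma_k)^t$ cancels the state discrepancy $\theta_k(t)-\theta'_k(t)$. The point where I expect to need a little care is propagating (b): I must verify that the perturbation in client $k$'s state does not leak into any neighbor's state, which holds precisely because (c) guarantees $x_k(t)=x'_k(t)$, so the local averages $y_i(t)$ in Equation~\eqref{eq:y_dis} are identical for all $i$, and hence by Equation~\eqref{eq:upd_dis} every $\theta_i$ with $i \neq k$ evolves identically in both executions. This is the main structural obstacle: unlike the centralized case where a single global average is shared, here one must check that the local-averaging topology still confines the discrepancy entirely to node $k$.

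\textbf{Bounding the density ratio.}
Given the proposition, only the $k$-th noise density differs between $\alpha$ and $f(\alpha)$, so the ratio of prefix probabilities telescopes to $\prod_{t=0}^{T-1} e^{|\eta'_k(t)-\eta_k(t)|/(cq^t)} = \prod_{t=0}^{T-1} e^{\frac{\delta}{c}\left(\frac{1-\sigma_k}{q}\right)^t}$. Integrating over $A$ against the mechanism's product measure and using that $f$ is a measure-preserving bijection onto $A'$, the condition $q \in (1-\sigma_m, 1)$ ensures $\frac{1-\sigma_k}{q} < 1$ for every uncorrupted $k$ (since $1-\sigma_k \leq 1-\sigma_m$), so the geometric series converges and the infinite product tends to $e^{\epsilon\delta}$ with $\epsilon = \frac{q}{c(q+\sigma_m-1)}$ as $T \to \infty$. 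Because the worst case over admissible $k$ is attained at the smallest $\sigma_k$, replacing $\sigma_k$ by $\sigma_m$ gives the uniform privacy bound claimed, yielding $Pr[\exec{\theta(0),Obs}] \leq e^{\epsilon\delta}\,Pr[\exec{\theta'(0),Obs}]$.
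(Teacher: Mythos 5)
Your proposal is correct and follows exactly the route the paper intends: the paper omits this proof, stating it is a straightforward generalization of Lemma~\ref{lem:pri}, and your argument carries out precisely that generalization (per-node noise shift $\delta(1-\sigma_k)^t$, the inductive proposition confining the discrepancy to node $k$ so all messages, local averages, and compromised states match, and the telescoping density ratio converging under $q>1-\sigma_m$). The only cosmetic point is that the infinite product actually converges to $e^{\epsilon_k\delta}$ with $\epsilon_k=\frac{q}{c(q+\sigma_k-1)}$, which you then correctly dominate by $e^{\epsilon\delta}$ via $\sigma_k\geq\sigma_m$, as you note.
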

We omit the proof of Lemma~\ref{lem:pri_dis} as it is a straight forward generalization of the proof of Lemma~\ref{lem:pri}.

In contrast to Lemma~\ref{lem:con}, the convergence of the distributed mechanism depends on the structure of graph $G$. 
Before stating the convergence result, we introduce Laplacian matrix $L$ of graph $G$ with elements:
\begin{equation}
\label{eq:L_dis}
l(i,j) =  \left\{
\begin{array}{ll}
|N(i)| & i=j, \\
-1 & (i,j)\in \E, \\
0 & \mbox{otherwise}.
\end{array}
\right.
\end{equation}
The Laplacian matrix $L$ for any graph is known to have several nice properties. It is by definition symmetric with real entries, hence it can be diagonalized by an orthogonal matrix.
It is positive semidefinite, hence its real eigenvalues can be ordered as $\lambda_1 \leq \lambda_2\leq \ldots \leq \lambda_N$  be the eigenvalues of $L$. Furthermore $\lambda_1 = 0$ and $\lambda_2 > 0$ if and only if the graph is connected. 
Let $\{v_1,v_2,\ldots,v_N\}$ be a set of orthonormal eigenvectors of $L$ such that $v_k$ corresponds to $\lambda_k$. In addition, denote $d_i = \frac{\sigma_i}{|N(i)|+1}$.
We state a sufficient condition of convergence as following.
\begin{assumption}
\label{ass:con}
Assume that graph $G$ has the following properties.
\begin{enumerate}[(I)]
\item $\lambda_2>0$, that is graph $G$ is connected. 
\item $\lambda_N<\frac{M^2}{2m}$, where $m = \inf_{i\in [N]}d_i$ and $M = \sup_{i\in [N]}d_i$.
\end{enumerate}
\end{assumption}
\begin{lemma}[Convergence]
\label{lem:con_dis}
The distributed mechanism described above achieves convergence if Assumption~\ref{ass:con} holds. 
\end{lemma}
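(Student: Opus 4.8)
The plan is to mirror the proof of Lemma~\ref{lem:con}: put the update in matrix form, track a quadratic disagreement potential, and show that its expectation decays to zero. First I would collect the local rules~\eqref{eq:y_dis} and~\eqref{eq:upd_dis} into a single vector recursion. Writing $D = \mathrm{diag}(d_1,\dots,d_N)$ with $d_i = \tfrac{\sigma_i}{|N(i)|+1}$, and using $(L\theta)_i = |N(i)|\theta_i - \sum_{j\in N(i)}\theta_j$ together with $d_i(|N(i)|+1)=\sigma_i$, a direct computation collapses the deterministic part of $\theta_i(t+1)$ to $\theta_i(t) - (DL\theta(t))_i$. This gives $\theta(t+1) = (I-DL)\theta(t) + \xi(t)$, where the injected noise is $\xi(t) = D\bar A\,\eta(t)$ and $\bar A$ is the closed-neighbourhood adjacency matrix ($\bar A_{ij}=1$ iff $j \in N(i)\cup\{i\}$). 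Note that $\mathbbm{1}$ is fixed by $I-DL$ since $L\mathbbm{1}=0$, so $\mathbbm{1}$ is the only non-contracting direction, which is exactly the consensus direction.

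As the potential I would take $V(t) = \theta(t)^T L \theta(t) = \tfrac12\sum_{(i,j)\in\E}(\theta_i(t)-\theta_j(t))^2 \ge 0$, which vanishes precisely when all coordinates agree (here I use connectedness, Assumption~\ref{ass:con}(I)). Writing $\theta_\perp$ for the component of $\theta$ orthogonal to $\mathbbm{1}$, property $\lambda_2>0$ gives $V(t)\ge \lambda_2\|\theta_\perp(t)\|^2$, and since $\sum_{i,j}(\theta_i-\theta_j)^2 = 2N\|\theta_\perp\|^2$, showing $E[V(t)]\to 0$ immediately yields $E[(\theta_i(t)-\theta_j(t))^2]\le \tfrac{2N}{\lambda_2}E[V(t)]\to0$, i.e.\ convergence in the sense of Definition~\ref{def:con}. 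So the whole lemma reduces to controlling $E[V(t)]$.

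Next I would expand $V(t+1)$ using $\theta(t+1)=(I-DL)\theta(t)+\xi(t)$. Because $\eta(t)$ is independent of $\theta(t)$ (which depends only on $\eta(0),\dots,\eta(t-1)$) and $E[\eta(t)]=0$, the cross term drops in expectation, leaving $E[V(t+1)] = E\!\left[\theta(t)^T (I-DL)^T L (I-DL)\theta(t)\right] + E[\xi(t)^T L \xi(t)]$. Since $\mathrm{Cov}(\eta(t)) = 2c^2 q^{2t} I$, the noise term equals $2c^2 q^{2t}\,\mathrm{tr}(\bar A^T D L D \bar A)$, a geometrically decaying forcing term $\kappa q^{2t}$ with $\kappa$ constant. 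The homogeneous factor is governed by the identity $(I-DL)^T L (I-DL) = L - 2LDL + LDLDL$, so everything hinges on bounding $\theta^T(2LDL - LDLDL)\theta$ below by a positive multiple of $V(t)$.

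This contraction step is the main obstacle, because $D$ and $L$ do not commute and the matrices involved are not simultaneously diagonalizable. The cleanest route is to symmetrize via $\tilde L = D^{1/2} L D^{1/2}$ and the substitution $\theta = D^{1/2}u$: using $D^{1/2}LDLD^{1/2} = \tilde L^2$ one gets $\theta^T L\theta = u^T\tilde L u$ and $\theta^T(I-DL)^T L (I-DL)\theta = u^T \tilde L (I-\tilde L)^2 u$, so in the eigenbasis of $\tilde L$ the required bound $u^T\tilde L(I-\tilde L)^2 u \le \rho\, u^T \tilde L u$ reduces to the scalar inequality $(1-\mu_k)^2 \le \rho$ for every nonzero eigenvalue $\mu_k$ of $\tilde L$, which holds with some $\rho<1$ precisely when $\mu_2>0$ and $\mu_N<2$. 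To turn the spectral condition $\mu_N<2$ into a checkable graph condition I would apply the crude bounds $D \succeq m I$ and $\|Dv\|\le M\|v\|$ to $\theta^T(2LDL-LDLDL)\theta = 2(L\theta)^T D (L\theta) - (D L\theta)^T L (DL\theta)$, obtaining the lower bound $(2m - \lambda_N M^2)\|L\theta\|^2 \ge (2m-\lambda_N M^2)\lambda_2\, V(t)$; Assumption~\ref{ass:con}(II) is exactly what forces the factor $2m-\lambda_N M^2$ to be positive (so I read the displayed bound as $\lambda_N < \tfrac{2m}{M^2}$). Either way one obtains $E[V(t+1)] \le \rho\,E[V(t)] + \kappa q^{2t}$ with $\rho\in[0,1)$, and unrolling this recursion---a contraction driven by a geometrically decaying forcing term---gives $E[V(t)]\le \rho^t V(0) + \kappa\sum_{s=0}^{t-1}\rho^{t-1-s}q^{2s}\to 0$, which completes the proof via the reduction above. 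The delicate points are ensuring $\rho<1$ strictly (needing both $\mu_2>0$ and $\mu_N<2$) and handling the convolution of the two distinct geometric rates $\rho$ and $q^2$ in the final estimate.
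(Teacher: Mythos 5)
Correct, and essentially the paper's own argument: you use the same potential $P(t)=\theta(t)^TL\theta(t)$, the same vector recursion $\theta(t+1)=(I-DL)\theta(t)+Dw(t)$, the same crude bounds $mI\preceq D\preceq MI$ and $L\preceq\lambda_N I$, and the same key inequality $\theta^TLL\theta\geq\lambda_2\,\theta^TL\theta$ (the paper's Proposition~\ref{prop:bnd}), arriving at the identical contraction-plus-decaying-forcing estimate $E[P(t+1)]\leq\rho\,E[P(t)]+\kappa q^{2t}$. Your refinements are sound and actually tighten spots the paper leaves loose: reading Assumption~\ref{ass:con}(II) as $\lambda_N<\frac{2m}{M^2}$ is right (the displayed $\frac{M^2}{2m}$ is a typo, since the paper's own proof invokes $2m-\lambda_NM^2>0$), the explicit unrolling of the two-rate recursion is more rigorous than the paper's term-by-term limit claim, and your $\frac{2N}{\lambda_2}$ reduction from $E[P(t)]\to 0$ to Definition~\ref{def:con} makes explicit a step the paper asserts without proof (its $P$ sums only over edges, so connectedness is genuinely needed there).
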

\begin{proof}
We define a function $P:\naturals \mapsto \nnreals$ as 
\[
P(t)=\frac{1}{2}\sum_{(i,j)\in \E}[\theta_i(t)-\theta_j(t)]^2.
\] 
Using the matrix notation $P(t) = \theta(t)^T \ L \ \theta(t)$.
By Assumption~\ref{ass:con}, $E[P(t)] = 0 \ \Leftrightarrow \sum_{i\neq j}E[\theta_i(t)-\theta_j(t)]^2 = 0$.
According to Equation~\eqref{eq:y_dis} and~\eqref{eq:upd_dis}, the update equation of client $i$ is:
\begin{equation}
\label{eq:loc_upd_dis}
\begin{array}{rl}
\theta_i(t+1) =& (1-d_i |N(i)|)\theta_i(t) 
  + d_i\sum_{j\in N(i)}\theta_j(t) \\
	& \  + d_i w_i(t),
\end{array}
\end{equation}
where 
\begin{equation}
\label{eq:w_def}
w_i(t)=\sum_{j\in N(i)\cup\{i\}}\eta_i(t).
\end{equation}
We define vector $w(t)=[w_1(t),\ldots,w_N(t)]^T$ and matrix $D\in \reals^{N \times N}$ with elements:
\begin{equation}
\label{eq:D}
d(i,j) =  \left\{
\begin{array}{ll}
d_i, & i=j, \\
0, & \mbox{otherwise}.
\end{array}
\right.
\end{equation}
The update rule for all the agents can be written as 
$\theta(t+1) = \theta(t) -  D L \theta(t) +  D  w(t)$.
Then, $P(t+1)$
\begin{equation}
\label{eq:P}
\begin{array}{ll}
=& \theta(t+1)^TL\theta(t+1) \\
=&(\theta(t) -  D L \theta(t) +  D  w(t))^T  L  (\theta(t) -  D L \theta(t) +  D  w(t))\\
=&P(t)-2\theta(t)^TLDL\theta(t) + \theta(t)^TLDLDL\theta(t) + \\
& \quad 2\theta(t)^T(I-DL)LD w(t) + w(t)^T DLD w(t).
\end{array}
\end{equation}
Taking expectation of both sides with respect to the coin flips of the algorithm starting from any state:  
\begin{equation}
\label{eq:P_exp}
\begin{array}{rl}
E[P(t+1)] &= E[P(t)] - E[Q(\theta(t))] + E[w(t)^TDLDw(t)],
\end{array}
\end{equation}
where, 
\[Q(\theta) = 2\theta^TLDL\theta - \theta^TLDLDL\theta.\]
The term $E[2\theta(t)^T(I-DL)LD w(t)]$ vanishes because
\begin{inparaenum}[(i)]
\item $\theta(t)$ and $w(t)$ are independent; and 
\item by Equation~\eqref{eq:w_def}, $w(t)$ has zero mean.
\end{inparaenum}

Now we will prove that there exists a constant $a \in (0,1)$ such that $Q(\theta(t)) \geq a P(t)$.
Because $L$ is positive semidefinite, we have $0\leq L \leq \lambda_N I$.
From Assumption~\ref{ass:con} and Equation~\eqref{eq:D}, we have $mI \leq D \leq MI$. 
Then,
\begin{equation}
\label{eq:Qbnd}
\begin{array}{rl}
Q(\theta) &\geq 2m\theta^TLL\theta - \lambda_N\theta^TLDDL\theta  \\
& \geq 2m \theta^TLL\theta - \lambda_NM^2\theta^TLL\theta \\
& \geq (2m - \lambda_NM^2)\theta^TLL\theta.
\end{array}
\end{equation}
% previous technique:
%
%Then, we will prove that $ Q(v_k) \geq a v_k^TLv_k$ for any eigenvector $v_k$ with $a = \min_{\lambda_k> 0} \{ 2M\lambda_k-m^2\lambda_k^2 \}$.
%Indeed, for $v_1$ corresponding to $\lambda_1=0$, $Q(v_1)=0 \geq a v_1^TLv_1$ is true for all $a$. 
%For  any other eigenvector $v_k$ corresponding to $\lambda_k>0$, we have:
%\[
%\begin{array}{rl}
%Q(v_k) &\geq 2m v_k^TLLv_k - M^2 v_k^TLLLv_k = (2m\lambda_k-M^2\lambda_k^2)v_k^TLv_k\\
%& \geq a v_k^TLv_k.
%\end{array}
%\]
%Then the claim $v_k$, $Q(v_k) \geq av_k^TLv_k$ for any eigenvector $v_k$ follows.
%%
%Because $\{v_1,v_2,\ldots,v_N\}$ is an orthonormal basis, for all $\theta = \sum_k \alpha_kv_k$,
%$Q(\theta) = \sum_k \alpha_k^2Q(v_k) \geq a\sum_k\alpha_k^2 v_k^TLv_k = a \theta^T L \theta$ follows.
%Then, we will show that $a>0$. 
%By the constraint of Assumption~\ref{ass:con}, we have $2M\lambda_k-m^2\lambda_k^2= \lambda_k(2M-m^2\lambda_k)>0,\forall \lambda_k > 0$. Then, $a>0$ follows.
%
%
The following proposition helps obtain a bound on  $a$.
\begin{prop}
\label{prop:bnd}
For any  $\theta\in \reals^N$, $\theta^T LL \theta \geq \lambda_2 \theta^T L \theta$.
\end{prop}
\begin{proof}
First, we show that the proposition holds for any eigenvector $v_k$ of $L$.
For the eigenvector $v_1$ corresponding to $\lambda_1=0$, we have $v_1^T L = 0$ and the inequality holds trivially. 
For any other eigenvector $v_k$ and the corresponding eigenvalue $\lambda_k>0$, we have $v_k^T LL v_k = \lambda_k v_k^T L v_k \geq \lambda_2 v_k^T L v_k$. 
Next, we  prove that the proposition holds for any vector $\theta$.
Because $\{v_1,v_2,\ldots,v_N\}$ is an orthonormal basis, 
for any $i\neq j$, $v_i^TLLv_j = \lambda_j v_i^TLv_j =\lambda_j^2v_i^Tv_j = 0$. 
For any $\theta = \sum_{k\in[N]} \alpha_kv_k$,
we have: 
\[ \begin{array}{rl}
\theta^T LL \theta &= (\sum_{k\in[N]} \alpha_kv_k)^T LL (\sum_{k\in[N]} \alpha_kv_k) \\ 
& = \sum_{k\in[N]} \alpha_k^2v_k^TLLv_k \\
& \geq \lambda_2 \sum_{k\in[N]} \alpha_k^2v_k^TLv_k  = \lambda_2 \theta^T L \theta.
\end{array} \]
\end{proof}
From Equation~(\ref{eq:Qbnd}), then it follows that
\[
Q(\theta(t)) \geq \lambda_2 (2m - \lambda_N M^2) P(t).
\]
Thus, for any $ a \leq \min(\lambda_2 (2m - \lambda_N M^2), 1)$, the inequality $Q(\theta(t)) \geq a P(t)$ holds.
%follows.
%
Also, by Assumption~\ref{ass:con}, $\lambda_2(2m - \lambda_NM^2)>0$.
% So $a$ can be selected such that
%\begin{inparaenum}[(i)]
%\item $a\in(0,1)$, and
%\item $a \leq \lambda_2(2m - \lambda_NM^2)$.
%\end{inparaenum}
Then, for some $a\in (0,1)$, Equation~\eqref{eq:P_exp} is reduced to 
\[ \begin{array}{rl}
E[P(t+1)] &\leq(1- a)E[P(t)] + E[w(t)^TDLDw(t)]\\
&\leq (1-a)E[P(t)] + \lambda_NM^2E[w(t)^Tw(t)].
\end{array}
\]
%By Equation~\eqref{eq:Qbnd} and Proposition~\ref{prop:bnd}, we have $Q(\theta) \geq \lambda_2(2m - \lambda_NM^2) \theta^T L\theta \geq a \theta^T L\theta$.
As $t \rightarrow \infty$ the contribution of the first term converges to $0$. 
For the second term, recall that each element of $w(t)$ is a linear combination of i.i.d $\eta_i(t) \sim Lap(cq^t)$. 
For $i \neq j$, $E[\eta_i(t)\eta_j(t)]=E[\eta_i(t)]E[\eta_j(t)]=0$. 
For any $i$, $E[\eta_i(t)^2]=Var(\eta_i(t)) = 2c^2q^{2t}$, which also converges  to 0.
So $E[w(t)^Tw(t)] \rightarrow 0$ as $t \rightarrow \infty$. 
Combining, we have $E[P(t)] \rightarrow 0$ as $t \rightarrow \infty$. 
%And convergence of the mechanism follows.
\end{proof}
In general, the expected consensus value of the distributed algorithm does not coincide with the initial average. 
Intuitively, a node with higher degree or slower evolution will have heavier weight on the consensus value.
In this context, Definition~\ref{def:acc} is modified by replacing the average $\bar{\theta}(0)=\frac{1}{N} \sum_i \theta_i(0)$ with a weighted modification $\bar{\theta}(0) = \frac{\sum_i\gamma_i\theta_i(0)}{\sum_i\gamma_i}$, where the weight $\gamma_i =\frac{1}{d_i} = \frac{|N(i)|+1}{\sigma_i}$. 
%\begin{definition}[Accuracy]
%\label{def:acc}
%For any initial state $\theta(0)$, $b \in [0,1]$ and $r \in \nnreals$ a randomized mechanism is said to achieve $(b,r)$-{\em accuracy\/} 
%if every execution starting from $\theta(0)$ converges to a state within $r$ of $\theta^*(\theta(0))$, with probability at least $1-b$.
%\end{definition}
\begin{lemma}[accuracy]
\label{lem:acc_dis}
The distributed mechanism achieves $(b,\frac{\sqrt{2\tilde{d}} c}{\sqrt{b (1-q^2)}})$-accuracy, where $\tilde{d} =  \frac{\sum_i(|N(i)|+1)^2}{(\sum_i\gamma_i)^2}$.
\end{lemma}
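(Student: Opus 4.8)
The plan is to track a fixed weighted combination of the client states that drifts only by a zero-mean noise term each round, identify its limit with the common consensus value guaranteed by Lemma~\ref{lem:con_dis}, and then control the total accumulated drift with a second-moment (Chebyshev) argument.

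First I would write the global update compactly as $\theta(t+1) = (I - DL)\theta(t) + D w(t)$ from Equation~\eqref{eq:loc_upd_dis}, and look for a left eigenvector $\gamma$ of $I - DL$ with eigenvalue $1$, i.e. a vector with $\gamma^{T} D L = 0$. Taking $\gamma_i = 1/d_i$ (the weight already fixed in the statement) gives $\gamma^{T} D = \mathbbm{1}_N^{T}$, and since $L$ is a graph Laplacian its rows sum to zero, so $\mathbbm{1}_N^{T} L = 0$ and hence $\gamma^{T} D L = 0$. Defining $S(t) \deq \gamma^{T}\theta(t)$ then yields $S(t+1) = S(t) + \gamma^{T} D w(t) = S(t) + \mathbbm{1}_N^{T} w(t) = S(t) + \sum_i w_i(t)$. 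Dividing by $\sum_i \gamma_i$, the weighted average $\bar{\theta}(t) = S(t)/\sum_i\gamma_i$ obeys $\bar{\theta}(t+1) = \bar{\theta}(t) + \frac{1}{\sum_i\gamma_i}\sum_i w_i(t)$, so $\bar{\theta}(T) = \bar{\theta}(0) + \sum_{t=0}^{T-1} \frac{1}{\sum_i\gamma_i}\sum_i w_i(t)$.

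Next I would connect $\bar{\theta}(T)$ to the accuracy target. Lemma~\ref{lem:con_dis} shows all coordinates $\theta_i(t)$ converge in mean square to a common consensus value; since $\bar{\theta}(t)$ is a fixed convex combination of these coordinates, it converges to the same value. Thus the deviation of the consensus value from the intended weighted initial average equals the total drift $\Delta \deq \sum_{t=0}^{\infty} \frac{1}{\sum_i\gamma_i}\sum_i w_i(t)$. I would then compute the second moment of $\Delta$. Expanding $w_i(t) = \sum_{j\in N(i)\cup\{i\}}\eta_j(t)$ and summing over $i$, each $\eta_j(t)$ is counted once for $i=j$ and once for each neighbor (using undirectedness), giving $\sum_i w_i(t) = \sum_j (|N(j)|+1)\,\eta_j(t)$. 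Because the $\eta_j(t)$ are independent, zero mean, with $\mathrm{Var}(\eta_j(t)) = 2c^2 q^{2t}$, we get $\mathrm{Var}\left(\sum_i w_i(t)\right) = 2c^2 q^{2t}\sum_j(|N(j)|+1)^2$, and independence across rounds gives $\mathrm{Var}(\Delta) = \frac{2c^2}{(\sum_i\gamma_i)^2}\sum_j(|N(j)|+1)^2 \sum_{t=0}^{\infty} q^{2t} = \frac{2c^2\tilde{d}}{1-q^2}$, exactly matching the $\tilde{d}$ in the statement.

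Finally, applying Chebyshev's inequality, $Pr[\,|\Delta| \geq r\,] \leq \mathrm{Var}(\Delta)/r^2$; setting the right-hand side to $b$ and solving gives $r = \frac{\sqrt{2\tilde{d}}\,c}{\sqrt{b(1-q^2)}}$, the claimed accuracy radius, and specializing to the complete-graph, uniform-$\sigma$ case recovers Lemma~\ref{lem:acc} with $\tilde{d} = \sigma^2/N$. The main obstacle is the limiting step in the third paragraph: I must argue carefully that the random consensus value really equals $\bar{\theta}(0) + \Delta$, i.e. that the $L^2$ limit from Lemma~\ref{lem:con_dis} is compatible with the partial-sum representation of $\bar{\theta}(T)$. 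Concretely I would show the partial sums of the drift are Cauchy in $L^2$ (their tail variance $\sum_{t\geq T} 2c^2 q^{2t}\tilde{d} \to 0$), so $\Delta$ is a well-defined $L^2$ limit, identify $\lim_T \bar{\theta}(T)$ with the common value, and then apply Chebyshev to this genuine limiting random variable rather than to a finite prefix — this is the step that requires the most care.
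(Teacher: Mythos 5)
Your proposal matches the paper's proof essentially step for step: the paper likewise tracks the weighted average $\bar{\theta}(t)=\sum_i\gamma_i\theta_i(t)/\sum_i\gamma_i$ (your left-eigenvector observation $\gamma^T DL=0$ is exactly the cancellation the paper obtains by summing the $\gamma_i$-scaled update equations), derives $\bar{\theta}(t+1)=\bar{\theta}(t)+\tilde{w}(t)$ with $\mathrm{Var}(\tilde{w}(t))=2\tilde{d}c^2q^{2t}$, bounds the accumulated variance by the geometric series $\frac{2\tilde{d}c^2}{1-q^2}$, and applies Chebyshev with $r=\sqrt{\mathrm{Var}}/\sqrt{b}$. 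Your added care in the final paragraph (showing the drift's partial sums are Cauchy in $L^2$ and identifying the limit with the common consensus value) is correct and in fact makes rigorous the paper's terser closing step, which simply invokes Lemma~\ref{lem:con_dis} and lets $t\rightarrow\infty$.
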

\begin{proof}
Let us fix an initial state $\theta(0)$ and define $\bar{\theta}(t) = \frac{\sum_i\gamma_i\theta_i(t)}{\sum_i\gamma_i}$ and $\tilde{w}(t) = \frac{\sum_iw_i}{\sum_i\gamma_i}$.
We rewrite Equation~\eqref{eq:loc_upd_dis} with 
\[
\gamma_i\theta_i(t+1) = \gamma_i\theta_i(t) - |N(i)|\theta_i(t) + \sum_{j\in N(i)}\theta_j(t) + w_i(t).
\]
Add up all $N$ equations and divided by $\sum_i\gamma_i$, we get:
\[
\bar{\theta}(t+1) = \bar{\theta}(t) +  \tilde{w}(t)= \bar{\theta}(0) + \sum_{s=0}^{t}\tilde{w}(s).
\]
From the definition of $\tilde{w}(t)$ and Equation~\eqref{eq:w_def}, we have 
\begin{eqnarray*}
Var(\tilde{w}(t)) &=&  \frac{Var(\sum_iw_i(t))}{(\sum_i\gamma_i)^2}= \frac{Var(\sum_i(|N(i)|+1)\eta_i(t))}{(\sum_i\gamma_i)^2} \\
&=& \frac{Var(\eta_i(t))\sum_i(|N(i)|+1)^2}{(\sum_i\gamma_i)^2}= 2\tilde{d}c^2q^{2t}.
\end{eqnarray*}
By $q\in (0,1)$, the series converges. 
\[
Var(\sum_{s=0}^t \tilde{w}(s))\leq Var(\sum_{s=0}^\infty \tilde{w}(s))=\frac{2\tilde{d}c^2}{1-q^2}.
\]
By Chebyshev's inequality for any $t\geq 0$: 
\[
Pr(|\bar{\theta}(t)-\bar{\theta}(0)| \leq r) 
= 1 - Pr(|\sum_{s=0}^t w(s)| > r) 
\geq 1-\frac{Var(\sum_{s=0}^t w(s))}{r^2}.
\]
Choosing $r = \frac{\sqrt{Var(\sum_{s=0}^t w(s))}}{\sqrt{b}} = \frac{\sqrt{2\tilde{d}}c}{\sqrt{b N(1-q^2)}}$, we have 
$1 - Pr(|\sum_{s=0}^t w(s)| > r) \geq 1-b$.
Let $t\rightarrow \infty$, by Lemma~\ref{lem:con_dis} every execution converges. Then the lemma follows.
\end{proof}

The trade-off between accuracy and privacy of this mechanism is similar to that of the client-server mechanism of Section~\ref{sec:pconsensus} and we discuss them together next.

\section{Discussion on Results}
\label{sec:discussion}

We proposed two mechanisms that achieve iterative private consensus over infinite horizon by adding a stream of noises to the messages set by the clients (to each other or to the server). 
The standard deviation of the Laplace distribution of the noise added in every round decreases and ultimately converges  to $Lap(0)$ which is the Dirac $\delta$ distribution at $0$. 
The mechanisms have 3 parameters: linear combination factor $\sigma$, initial noise $c$ and noise convergence rate $q$. 
The constraint to achieve privacy over infinite horizon is that $q>1-\sigma$, 
which roughly means that the noise should converge slower than the system's inertia 
so as to ``cover'' the trail of dynamics.

From Lemma~\ref{lem:pri} and \ref{lem:pri_dis} we observe that $\epsilon$ decreases with larger $c$ or $q$.
This implies that the system has a higher privacy if the noise values are picked from a Laplace distribution with larger parameters (and hence larger standard deviation). 
From Lemma~\ref{lem:acc} and \ref{lem:acc_dis}, however, a more dispersive noise results in worse accuracy. 
The tradeoff between privacy and accuracy for different noise convergent rate ($q$) is illustrated in Figure~\ref{fig:tradeoff}.
If we fix the parameter $q$, we observe that for $\epsilon$-differential privacy for $N$ agents and an accuracy level of $b$ the accuracy radius $r$ is $O(\frac{1}{\epsilon\sqrt{\beta N}})$. For specific values on these parameters, the dependence between $\epsilon$ and $r$ is shown in Figure~\ref{fig:tradeoff}.
\begin{figure}[bth]
\centering
\caption{Privacy and Accuracy as functions of the Noise convergent rate in the centralized mechanism.
	Parameterized with $N = 500$, $\sigma = 0.8$, $c = 10$ and $\beta = 0.5$.}
	\includegraphics[width=0.48\textwidth]{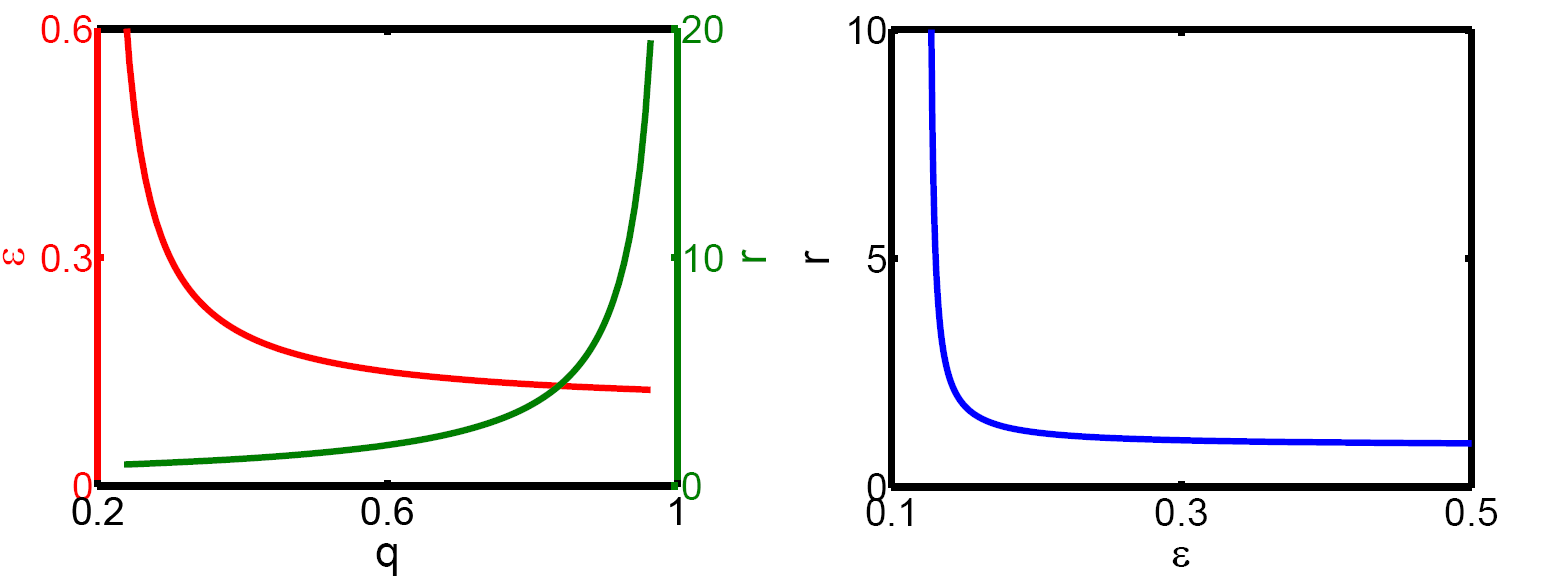}
	\label{fig:tradeoff}
\end{figure}

\section{Related Work}
\label{sec:related}

Our consensus mechanism  has similarities with  the protocols for computing sum and inner product presented in~\cite{DBLP:journals/corr/abs-1111-5228}, in that, all these protocols rely on adding noise to the states communicated among the participants.  Our mechanism differs in the type of noise (geometrically decaying Laplace) that is added. Moreover, in our setup, the computed outputs are used as feedback for updating the state of the participants to achieve convergence. 

In~\cite{Duan:2010:PPL:1929820.1929839} a framework for securely computing general types of aggregates  is presented. Every client splits its private data into pieces and sends them to different servers. If at least one server is not compromised, then the iterative aggregate computation is guaranteed to preserve privacy of the individuals. Our mechanism is quite different 
and it guarantees privacy even if the only server is compromised.

In~\cite{Xiong:2007:PDP:1275505.1275510}, the authors present distributed protocols for computing  $k$ maximum values among all participants. In this protocol, the clients communicate a global vector of $k$-maximum values over a ring network. In each step, the client processing the global vector either with an exponential decaying probability honestly replaces the values in global state if it is smaller than one of the local values, 
or it replaces the values in the vector with randomly generated small numbers. 
The metric of privacy is {\em Loss of Privacy\/} which characterizes the additional knowledge to the adversary of gaining intermediate result besides the final results. 
This work is setup with quite different definition of privacy compare to ours. In addition, some features of our mechanism, such as feedback update and infinite horizon, are not presented in this protocol.

\section{Conclusions and Future direction}
\label{sec:conclusion}

In this paper, we formalize a Synchronized Private Consensus problem and propose two mechanisms for solving it. The first one relies on the client-server model of communication and the latter is purely distributed.  The key idea is to add a random noise to the clients' messages to the server (or other clients) that is drawn from a Laplace distribution that converges to the Dirac distribution. The messages with large noice give differential privacy and as the noise level attenuates, the system converges to the target value with probability that depends inversely on the security parameter and directly in the number of participants. 
The feedback $y(t)$ from the server is the mean of all noisy messages sent. And, the clients update their states by taking a linear combination of $y(t)$ and their previous state. We formally prove the privacy and convergence of this mechanism. The key proof technique for privacy,  relies on constructing a bijective map between two sets of executions starting from different but adjacent initial states. 

%\subsection{Future direction}

To the best of our knowledge this is the first investigation of differential privacy in the context of control systems where the ultimate goal is convergence. Our results suggest several directions for future work.
First, we are trying to apply our method to a larger set of control problems that arise from iterative closed-loop control.
Novel applications of this arise from  differential privacy and more generally security of  {\em distributed cyber-physical systems\/} where the  physical state is updated smoothly according to some differential equations.

Second, we also interested in exploring the tradeoff between privacy and performance under more general dynamics of the system. In the SPC problem we discussed, the dynamics of the system is discrete and linear. We expect to extend the analysis to continuous or non-linear systems. Also, establishing a lower bound for the problem will be of significance.
%as a function of system's dynamics will be of significance

An orthogonal direction is to develop automated verification and synthesis algorithms for controllers that preserve differential privacy. Along these lines,  a verification framework for streaming algorithms has been presented in~\cite{Barthe:reason.privacy,Kaynar:verification.pri}. The challenge will be to extend these ideas to synthesis and feedback control systems.

\bibliographystyle{abbrv}
\bibliography{sayan1,Privacy}

\end{document}